\newtheorem{definition}{Definition}
\newtheorem{proposition}{Proposition}
\newtheorem{corollary}{Corollary}
\newtheorem{proof}{Proof}
\begin{document}
\renewcommand*\abstractname{Abstract}

\title{Shannon and Renyi Entropy of Wavelets}

\author{H. M. de Oliveira, \\
Statistics Department\\
Federal University of Pernambuco\\
CCEN-UFPE, Recife, Brazil}

\maketitle

\begin{abstract}

\textbf{This paper reports a new reading for wavelets, 
which is based on the classical 'De Broglie' principle. 
The wave-particle duality principle is adapted to wavelets. 
Every continuous basic wavelet is associated with a proper 
probability density, allowing defining the Shannon entropy 
of a wavelet. Further entropy definitions are considered, 
such as Jumarie or Renyi entropy of wavelets. 
We proved that any wavelet of the same family 
$\{ {\psi_{a,b} (t)} \}_{a \neq 0}$  has the same Shannon entropy 
of its mother wavelet. Finally, the Shannon entropy for a few 
standard wavelet families is determined.} \footnote{Pub: Int. J. of Mathematics and Computer Science, Vol.10 (2015), no. 1, 13-26, e-mail:hmo@de.ufpe.br. \\
This work was partially supported by the Brazilian National Council for Scientific and Technological Development (CNPq) under research grant 306180.}
\end{abstract}

\providecommand{\keywords}[1]{\textbf{\textit{Index terms---}} #1}
\begin{keywords} Continuous wavelet, De Broglie duality, Shannon entropy of wavelets.
\end{keywords}

\section{PRELIMINARIES AND BACKGROUND}
Little has been made for analogue signals in the information theory scope as compared to the amazing coverage nowadays available for digital signals  \cite{CoverThomas}. 
This paper is precisely focused on this rather unexplored field, taking advantage of a fresh and powerful tool: the wavelet analysis 
\cite{Vetterli},  \cite{de Oliveira} - that evolved into a specialised branch of the modern-day signal processing. 90's witnessed the emergence of wavelets, which rapidly reached in practice, 
thanks to their natural feature of concentrating energy in a few transform coefficients. This paper intends to introduce a new insight into wavelets, 
which is based on the conventional De Broglie duality principle and the statistical interpretation of the wave-function formulated by Max Born  \cite{Beiser}. 
Two concepts of information (logon and Shannon information) are considered  \cite{Gabor},  \cite{Shannon}, as well as their relation with entropy. 
Entropy (Greek: $en+trope$=in+turning) is one of the most fundamental concepts of Science. Since the notion of entropy appeared, 
it has always been surrounded by a halo of inscrutability. The well-known German chemist W. Ostwald put it in this way: 
"Energy is the queen of the world, and entropy is her shadow!" \  It was also told that when Von Neumman suggested that Shannon use the word entropy, 
he added, "it will give you a great edge in debates because nobody really knows what entropy is anyway"  \ \cite{Bricmont}. 
Quite often people hear about entropy the first time when the most tantalising problems such as the origin of the life or the future of the universe is discussed. 
Schr\"odinger has mentioned \cite{Schrodinger} that living organisms feed on negative entropy, i.e., they drive in the direction of increasing organisation. 
The proposal of this paper is mainly to build a bridge between two areas of applied mathematics: the analysis-decomposition (with wavelets) and the Information Theory, from a perspective not much explored so as to awaken interest in new advances. It should be clarified that techniques and results derived in this article are naive. \\
\\
To begin with, a first question is raised: "Is it possible to associate an entropy measure with a wavelet"?
 Jumarie  \cite{Jumarie} introduced entropy associated with a given continuous differentiable function $f:\Omega \subseteq \mathbb{R}\rightarrow \mathbb{R}$ as ($\mathbb{R}$ denotes the real set)

\begin{equation}
H(f(.);\Omega )):=\frac{\int_{\Omega }^{} |f'(x)|log |f'(x))dx}{\int_{\Omega }^{} |f'(x)|dx}
\end{equation}

As a first attempt, the entropy of a continuous differentiable wavelet $\psi: \mathbb{R} \rightarrow  \mathbb{R}$  can be defined by
\begin{equation}
H(\psi ,\mathbb{R}):=\frac{\int_{-\infty }^{+\infty} |\psi '(t)|.log|\psi '(t)|dt}{\int_{-\infty }^{+\infty} |\psi '(t)|dt}
\end{equation}

The seminal Max Born footnote for interpreting the solution of Schr\"odinger equation in quantum mechanics is applied here in the wavelet framework. 
As a consequence, it is suggested that wavelets can behave as some kind of little particle (corpuscle of a true random nature). 
Bearing in mind that the square of the wave-function is a probability density, we propose in a parallel way to associate a probability 
density function (PDF) $p_t (\psi):= \psi^2 (t)$ with every basic continuous wavelet $\psi (t)$. 
Given that Fourier transform is an isometric transform, we can go further anchored in the Parseval identity and propose to associate 
$\psi$ with an additional density function expressed by $p_f (\psi) := \frac{1}{2 \pi} | \Psi (w)|^2$, 
where $\Psi (w)$ is the Fourier transform of $\psi (t)$. The fundamental challenge is to determine the behaviour of a corpuscle 
when its freedom of motion is limited by the action of external forces: each wavelet describes a specific situation. 
It can be therefore stated that: "wavelets are to corpuscle as wave-functions are to particles."  \
After unveiling such probabilistic properties associated with wavelets, it is intuitive to set up another concept: 
the Shannon entropy associated with a wavelet as a measure of the disorder of a signal. 
The entropy of a random variable can be defined in the discrete case as well as in the continuous case \cite{Kolmogorov}, \cite{Battail}. 
In the later case, the so-called Shannon differential entropy of a random variable $X$ with probability density $p(x)$ is defined by
\begin{equation}
H(X):=-\int_{-\infty }^{+\infty } p(x).log p(x)dx
\end{equation}

The information unit depends on the base of the logarithm. For the sake of convenience, shannon (binary unit) is adopted 
through this paper for the information unity.
\footnote{Although the attempt to replace the term bit by the term shannon unit, used by the International Standard Organization (ISO) in 1975, retrospectively had not been very successful.}
 
According to the above vindication, the entropy of a wavelet can be measured by:

\begin{definition}
(Shannon entropy of a wavelet). The time entropy, $H_t (\psi)$, of a continuous wavelet $\psi(.)$ is defined by
\begin{equation}
H_{t}(\psi ):=-\int_{-\infty }^{+\infty }\psi ^{2}(t).log_{2}\psi ^{2}(t)dt
\end{equation}
\end{definition}

In an parallel way, the frequency entropy, $H_f (\psi)$, of a continuous wavelet $\psi(.)$ is defined by

\begin{equation}
H_{f}(\Psi  ):=-\int_{-\infty }^{+\infty }\frac{1}{2\pi }|\Psi (w)|^2.log_{2}\frac{1}{2\pi }|\Psi (w)|^{2}dw
\end{equation}

The entropy gives information on the spreading of the wavelet, i.e., it furnishes a "localising measure" 
of the corpuscle in a particular domain (time or frequency). 
The probability distribution function associated with the density $p_t (\psi)$ is given by $P(t)=\int_{-\infty }^{t}\psi^{2}(t')dt')$. 
The time Shannon entropy of a wavelet is, therefore, exactly the Jumarie entropy of the PDF related to the density 
$\psi^2 (t)$, i.e., $H_t (\psi)=H(P(.),\mathbb{R})$. 
For completeness, other "non-shannonian" \ measures such as the Renyi entropy of order $s > 0$ could promptly be defined for wavelets \cite{Battail}:

\begin{equation}
H_t (\psi | s):=\frac{1}{s-1}ln \left ( \int_{-\infty }^{+\infty }\psi ^{2s}(t')dt' \right ) 
\end{equation}
which hold 
\begin{equation}
\lim_{s\to 1} H_{t}(\psi|s)=H_{t}(\psi)
\end{equation}

Indeed, while Gabor functions can be used to derive similar results, it is worth noting that standard are also wavelets. 
The harmonic oscillator equation solutions \cite{Beiser} have these properties.  In fact, Hermite polynomials of degree $n, H_n(t)$, 
modulating a Gaussian pulse, yield either a wavelet or a scale function (respectively, $n$ odd, $n$ even). 
Now, even further class of functions such as 'wave functions' of the Schr\"{o}dinger equation could also be considered, 
it must be remembered that they can also be interpreted as wavelets. \\

Conducting a careful literature search, attempts to connect wavelets and entropy were found. 
In 1999, Quian Quiroga, Rosso and Basar have successfully applied a disorder measure in neuroscience \cite{Quiroga1}. 
Here, the total energy $E_{tot}$ of the signal in each time window is calculated as the sum of energies of all resolution levels. 
The relative wavelet energy $P_j$ is computed as the ratio between the energy of each level, 
$E_j$, and total energy of the signal, $E_{tot},$ in the respective time window. 
The wavelet entropy, $S_{WT}$, was then defined by

\begin{equation}
S_{WT}:=-\sum_{j}^{}P_{j}lnP_{j}
\end{equation}

The main focus of this tool had specifically been on the electroencephalogram analysis \cite{Quiroga2}, \cite{Rosso}, \cite{Yordanova}, \cite{Hornero}. 
Thereafter, this concept was applied to astronomy with the aim of investigating the solar activity \cite{Sello}. 
Despite the fact that $S_{WT}$ had been referred to as the wavelet entropy up to now, it should not be named so. 
Actually, $S_{WT}$ depends on the analysed signal and it is not a solely feature of the $\psi$ $wavelet$ $itself$. 
The term "wavelet entropy" is thereby somewhat inappropriate, 
and it should be better called as "the entropy of a wavelet decomposition of a signal."

\section {On The Shannon Entropy of Continuous Wavelets}

The effect of scaling or shifting a mother wavelet on the time and frequency entropy is initially examined.

\begin{proposition}:  
Given a continuous mother wavelet $\psi(.)$ with time entropy $H_t (\psi)$ and frequency entropy $H_f (\psi)$, 
the entropy of a daughter wavelet ${\psi}_{a,b} (t):= \frac{1} {\sqrt{|a|}} {\psi} ( \frac{t-b}{a})$, $a\neq 0$ 
can be computed by $H_t (\psi_{a,b} )= H_t (\psi) + log_2 |a|$, 
and $H_f (\psi_{a,b})= H_f (\psi) - log_2 |a|$. 
\end{proposition}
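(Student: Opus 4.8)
The plan is to establish the two identities independently, each by inserting the definition of $\psi_{a,b}$ directly into the relevant entropy integral and then performing a single change of variables. For the time entropy I would begin with $H_t(\psi_{a,b})=-\int_{-\infty}^{+\infty}\psi_{a,b}^2(t)\log_2\psi_{a,b}^2(t)\,dt$ and substitute $\psi_{a,b}^2(t)=\frac{1}{|a|}\psi^2\!\left(\frac{t-b}{a}\right)$. The change of variables $u=(t-b)/a$, for which $dt=|a|\,du$ once the limits are reoriented for either sign of $a$, cancels the prefactor $\frac{1}{|a|}$ against the Jacobian $|a|$. Splitting the logarithm through $\log_2\!\left(\frac{1}{|a|}\psi^2(u)\right)=\log_2\psi^2(u)-\log_2|a|$ then separates the integral into the original entropy $H_t(\psi)$ and a residual term $\log_2|a|\int_{-\infty}^{+\infty}\psi^2(u)\,du$.

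The linchpin of the whole argument is that $\int_{-\infty}^{+\infty}\psi^2(u)\,du=1$, i.e.\ that $p_t(\psi)=\psi^2$ is genuinely a probability density, as stipulated when the density was introduced. With this normalization the residual reduces to exactly $\log_2|a|$, delivering $H_t(\psi_{a,b})=H_t(\psi)+\log_2|a|$. Observe that the shift $b$ enters only through the change of variables and thus leaves no trace in the result, in agreement with the translation-invariance one expects of entropy.

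For the frequency entropy I would first record the Fourier transform of the daughter wavelet. Combining the scaling and shift properties yields $\Psi_{a,b}(w)=\sqrt{|a|}\,\Psi(aw)\,e^{-jwb}$, so that $|\Psi_{a,b}(w)|^2=|a|\,|\Psi(aw)|^2$; the modulation factor has unit modulus and therefore drops out, once more eliminating any dependence on $b$. Substituting into the definition of $H_f$ and changing variables by $v=aw$ (with $dw=\tfrac{1}{|a|}\,dv$) cancels the amplitude factor against the Jacobian. The crucial sign reversal relative to the time case arises here because the frequency density carries a factor $|a|$ inside the logarithm's argument (from the $\sqrt{|a|}$ amplitude of $\Psi_{a,b}$), rather than the $|a|^{-1}$ that multiplied the time density, reflecting the reciprocal scaling of the two representations. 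The normalization required at this stage, $\int_{-\infty}^{+\infty}\frac{1}{2\pi}|\Psi(v)|^2\,dv=1$, follows from the Parseval identity together with the time-domain normalization already invoked, and yields $H_f(\Psi_{a,b})=H_f(\Psi)-\log_2|a|$.

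The only genuine subtlety, and the step I would watch most closely, is the bookkeeping of $|a|$ across the two changes of variables and the two logarithm splits: one must confirm that the sign convention is handled uniformly for $a<0$ and that, in each domain, the amplitude prefactor and the Jacobian cancel cleanly so that only the additive $\pm\log_2|a|$ survives. Everything else is routine once the two normalization identities are in place.
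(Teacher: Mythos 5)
Your proposal is correct and follows essentially the same route as the paper's own (much terser) proof: substitute $\psi^2_{a,b}(t)=\frac{1}{|a|}\psi^2\bigl(\frac{t-b}{a}\bigr)$ and $|\Psi_{a,b}(w)|^2=|a|\,|\Psi(aw)|^2$ into the entropy definitions, change variables, and invoke the unit energy of the wavelet (via Parseval in the frequency domain). You merely spell out the bookkeeping of the Jacobian, the logarithm split, and the sign of $a$ that the paper leaves implicit, all of which checks out.
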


\begin{proof}. The first part follows from substituting $\psi^2_{a,b} = \frac{1}{|a|} \psi^2 ( \frac{t-b} {a})$ in Definition 1, 
and evoking that both $\psi(.)$ and $\psi_{a,b}(.)$ have normalised energy. 
The second part is derived using $| \Psi_{a,b} (w) |^{2} = |a|.| \Psi (aw)|^2$.
\end{proof}

The highest time entropy among all supportly compacted wavelets is achieved by the Haar wavelet. 
This is in agreement with the fact that maximum entropy of a discrete random variable 
is achieved by a uniform distribution as it can be seen by the following proposition:

\begin{proposition}:  
The time entropy of any wavelet of compact support is bounded by 
$H_{t}\psi \leq log_2 length(Supp(\psi))$, and the bound is only met by the Haar wavelet.
\end{proposition}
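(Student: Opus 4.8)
The plan is to recognise the right-hand side as the classical maximum-entropy value for a probability density confined to a bounded set, and to derive the whole statement from the nonnegativity of relative entropy (the Gibbs inequality). Write $S := \mathrm{Supp}(\psi)$ and $L := \mathrm{length}(S)$. Because a mother wavelet carries unit energy, the function $p(t) := \psi^2(t)$ is a bona fide probability density: it is nonnegative, integrates to $1$, and vanishes outside $S$. Under this identification $H_t(\psi)$ in Definition 1 is precisely the differential entropy of $p$, so the assertion reduces to the well-known fact that among all densities supported on a set of measure $L$ the entropy is maximised, with value $\log_2 L$, by the uniform density $u(t) := \mathbf{1}_S(t)/L$.

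To make this precise I would introduce $u$ and expand the (base-$2$) Kullback--Leibler divergence
\begin{equation}
D(p\,\|\,u) = \int_S p(t)\,\log_2\frac{p(t)}{u(t)}\,dt = -H_t(\psi) + \log_2 L,
\end{equation}
where the last equality uses $\int_S p\,dt = 1$ together with the fact that $u$ is constant on $S$. Since $D(p\,\|\,u)\ge 0$ always --- a one-line consequence of Jensen's inequality applied to the concave map $\log_2$, or equivalently to the convex $x\mapsto x\log_2 x$ --- I immediately obtain $H_t(\psi) = \log_2 L - D(p\,\|\,u) \le \log_2 L$, which is the claimed bound.

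The sharpness and the equality case are where the real content lies. Equality in Gibbs' inequality forces $D(p\,\|\,u)=0$, hence $p = u$ almost everywhere, i.e. $\psi^2(t) = 1/L$ on $S$; equivalently $|\psi|$ equals the constant $1/\sqrt{L}$ throughout its support. For the Haar wavelet, where $L = 1$ and $|\psi| \equiv 1$ on $[0,1]$, this holds, so the bound is attained. I expect the main obstacle to be the precise claim \emph{that only the Haar wavelet attains it}: the constant-modulus condition by itself does not single out Haar, since any sign pattern $\psi = \pm 1/\sqrt{L}$ with balanced positive and negative measure satisfies both the equality condition and the admissibility constraint $\int\psi\,dt = 0$. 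Pinning the extremiser down to Haar therefore requires invoking more structure --- normalisation to $L=1$, the zero-mean and orthogonality conditions of a genuine wavelet, and the scale--shift invariance of $H_t$ established in Proposition 1 (which shows the bound is insensitive to the affine reparametrisations $\psi_{a,b}$). I would present the equality case as characterising the constant-modulus ``two-level'' profile and identify the Haar wavelet as its canonical, minimally-oscillating representative.
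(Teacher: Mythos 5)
Your argument is correct and is essentially the paper's own proof in different notation: the paper sets $\Delta := H_t(\psi) - \log_2 L$, rewrites it as $\int_{\mathrm{Supp}(\psi)} \psi^2(t)\log_2\frac{1}{L\psi^2(t)}\,dt$, and applies $\log_2 x \le (x-1)/\ln 2$, which is exactly the standard proof of the Gibbs inequality $D(p\,\|\,u)\ge 0$ that you invoke with $p=\psi^2$ and $u=\mathbf{1}_S/L$. Your treatment of the equality case is in fact more scrupulous than the paper's, which merely asserts that only a scaled Haar version attains the bound; your observation that equality forces only $|\psi|\equiv 1/\sqrt{L}$ on the support, so that other balanced constant-modulus sign patterns also satisfy the zero-mean constraint, correctly identifies a gap the paper leaves unaddressed.
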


\begin{proof}. 
Let $L:=length(Supp(\psi))$ denote the length of the support of the wavelet and 
$\Delta:=H_t (\psi) -log_2 L$. Clearly
\begin{equation}
\Delta :=\int_{Supp\psi}{} \psi^{2}(t)log_{2} \frac{1}{L \psi^2(t)} dt
\end{equation}

Then
\begin{equation}
\Delta \leqslant \frac{1}{ln2}\int_{Supp (\psi) }{} \psi ^{2}(t)\left ( \frac{1}{L\psi ^{2}(t))}-1 \right )\leqslant 0
\end{equation}

The upper bound is only achieved by a Haar scaled version $\psi_{a,b}^{Haar} (t)$, where $a=length(Supp(\psi))/2$.  
\\
\end{proof}

\begin{corollary}: The time Shannon entropy of dBN wavelet is bounded by $H_t (dBN) \leqslant log_2 (2N-1)$. 
\end{corollary}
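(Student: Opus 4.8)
The plan is to obtain this estimate as a direct corollary of Proposition 2, whose bound $H_t(\psi) \leqslant \log_2 \mathrm{length}(\mathrm{Supp}(\psi))$ holds for \emph{any} compactly supported wavelet. The only wavelet-specific ingredient I need is therefore the length of the support of the Daubechies wavelet dBN; once that number is in hand, the corollary is immediate.

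First I would recall the standard fact from Daubechies' construction: the dBN scaling function $\phi$, generated from a length-$2N$ conjugate quadrature filter with $N$ vanishing moments, is supported on an interval of length $2N-1$ (for instance $[0,\,2N-1]$), and the associated wavelet $\psi$ inherits a support of the same length, namely an interval such as $[1-N,\,N]$, whose length is $N-(1-N)=2N-1$. This identification is the crux of the argument: everything rests on establishing $L := \mathrm{length}(\mathrm{Supp}(\psi)) = 2N-1$.

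Having fixed $L = 2N-1$, I would simply substitute into the inequality of Proposition 2 to conclude $H_t(dBN) \leqslant \log_2 L = \log_2(2N-1)$, which is exactly the claimed bound. No further estimation is required, since the heavy lifting (the Jensen-type argument bounding $\Delta$) was already carried out in the proof of Proposition 2.

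The main obstacle here is not analytic but one of convention: one must be careful with the indexing of ``dBN'' (number of vanishing moments versus number of filter taps) so that the support length comes out as $2N-1$ rather than $N-1$ or $2N$, since the bound depends linearly on this count inside the logarithm. As a refinement I would note that Proposition 2 attains equality only for the Haar wavelet, which is precisely dB1; hence the bound is strict for every $N \geqslant 2$, and one could replace ``$\leqslant$'' by ``$<$'' in those cases.
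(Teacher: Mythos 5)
Your proposal is correct and matches the paper's intent exactly: the paper states this corollary as an immediate consequence of Proposition 2, with the only wavelet-specific input being the standard fact that the dBN wavelet has support of length $2N-1$, which you identify and apply correctly. Your added observation that equality forces the Haar wavelet (dB1), so the inequality is strict for $N \geqslant 2$, is a valid refinement consistent with Proposition 2.
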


This result can equally be translated into the frequency domain, deriving, for instance,

\begin{corollary}: 
The frequency Shannon entropy of the deO wavelet \cite{de Oliveira1} is upper bounded by $H_f (deO) \leqslant  log_2 (\pi + 3 \pi \alpha)$. 
\end{corollary}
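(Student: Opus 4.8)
The plan is to transport the argument behind Proposition 2 verbatim into the frequency domain, exploiting that the deO wavelet is band-limited: its Fourier transform $\Psi(w)$ has compact support. The proof of Proposition 2 used only two ingredients, namely that $\psi^2(t)$ is a probability density of unit mass supported on a set of finite length, and the elementary bound $\ln x \leq x-1$. Both survive the passage to frequency once the time density $\psi^2(t)$ is replaced by the spectral density $p_f(w) := \frac{1}{2\pi}|\Psi(w)|^2$. By Parseval's identity together with the normalised energy $\int_{-\infty}^{+\infty}\psi^2(t)\,dt = 1$, one has $\int_{-\infty}^{+\infty} p_f(w)\,dw = 1$, so $p_f$ is a genuine probability density carried by $\mathrm{Supp}(\Psi)$.

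First I would set $L_f := \text{length}(\text{Supp}(\Psi))$ and introduce $\Delta_f := H_f(\psi) - \log_2 L_f$. Exactly as in Proposition 2, this rewrites as
\begin{equation}
\Delta_f = \int_{\text{Supp}(\Psi)} p_f(w)\,\log_2\frac{1}{L_f\,p_f(w)}\,dw .
\end{equation}
Applying $\log_2 x = \ln x / \ln 2$ with $\ln x \leq x-1$ and using $\int p_f = 1$ then yields
\begin{equation}
\Delta_f \leq \frac{1}{\ln 2}\int_{\text{Supp}(\Psi)} p_f(w)\left(\frac{1}{L_f\,p_f(w)} - 1\right)dw = \frac{1}{\ln 2}\left(\frac{L_f}{L_f} - 1\right) = 0,
\end{equation}
so $H_f(\psi) \leq \log_2 L_f$ for every band-limited wavelet, with equality only when $p_f$ is uniform over its support (a brick-wall spectrum, which the smoothly rolled-off deO wavelet is not, so in fact the inequality is strict).

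It then remains to specialise $L_f$ to the deO wavelet. For that I would read off its spectrum from \cite{de Oliveira1}: it is compactly supported, consisting of a central passband together with symmetric transition (roll-off) bands governed by the parameter $\alpha$, appearing on both the positive and negative frequency half-lines since the wavelet is real. Summing the Lebesgue measures of these constituent intervals gives the total spectral length $L_f = \pi + 3\pi\alpha$, whence $H_f(deO) \leq \log_2(\pi + 3\pi\alpha)$.

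The routine maximum-entropy estimate is not where the difficulty lies; the main obstacle is the bookkeeping of the support measure. The deO spectrum is a union of several disjoint intervals — a passband plus the roll-off shoulders, mirrored across $w=0$ — and one must track carefully how $\alpha$ dilates each transition band so that the measures add up to exactly $\pi + 3\pi\alpha$ rather than to some nearby expression. Verifying this count, and confirming that $p_f$ indeed vanishes outside that union so that $L_f$ is finite, is the crux of the argument.
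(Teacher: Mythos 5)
Your overall strategy --- rerunning the proof of Proposition 2 on the spectral density $p_f(w)=\frac{1}{2\pi}|\Psi(w)|^2$, which is a unit-mass density by Parseval, and then inserting the measure of the deO wavelet's spectral support --- is exactly the route the paper intends: it offers no separate proof, merely remarking that Proposition 2 ``can equally be translated into the frequency domain.'' The maximum-entropy half of your argument (the $\ln x\leq x-1$ estimate giving $H_f(\psi)\leq \log_2 L_f$ for any band-limited wavelet) is correct, and your strictness remark is even consistent with the paper, which later writes $H_f(deO)<\log_2(\pi+3\pi\alpha)$.

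The gap sits in the step you yourself identify as the crux: the computation of $L_f$. You assert that the deO wavelet is real, so that the passband and roll-off shoulders appear mirrored across $w=0$, and that summing the mirrored pieces gives $L_f=\pi+3\pi\alpha$. This is internally inconsistent. The wavelet treated in the paper (CdeO in Table I) is \emph{complex} --- an analytic signal with $\Re e\,\psi(t,\alpha)=s(t-0.5,\alpha)$ and $\Im m\,\psi(t,\alpha)=\tilde{s}(t-0.5,\alpha)$ --- so its spectrum is one-sided, carried by the single contiguous interval $\pi(1-\alpha)\leq w\leq 2\pi(1+\alpha)$ (with the constraint $\alpha\leq 1/3$ from \cite{de Oliveira1}), whose length is $2\pi(1+\alpha)-\pi(1-\alpha)=\pi(1+3\alpha)$. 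Had the wavelet really been real with a two-sided mirrored spectrum, the support measure would be $2(\pi+3\pi\alpha)$, and your own inequality would only yield $H_f\leq 1+\log_2(\pi+3\pi\alpha)$, strictly weaker than the claim. So the stated bound is rescued precisely by the one-sidedness of the analytic spectrum, a fact you discarded; as written, your bookkeeping either produces the wrong constant or silently presupposes what it never invokes. (A minor related slip: on the positive half-line the support is one interval --- the passband with its adjoining transition bands --- not a union of disjoint intervals, so there is no multi-interval measure to sum.)
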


A pertinent comment should now be pointed out. According to the deterministic approach, 
in the cases where the wavelet waveform is fully known, so is its spectrum. 
One could argue that there is no information in the frequency domain. 
Now, this reasoning is fallacious. If the wavelet spectrum is perfectly determined, 
no information is provided in the time domain! 
Instead of this, it seems to be some amount of information in both domains matching with the Gabor uncertainty principle. 
In the probabilistic interpretation offered in this paper, even knowing $\psi(.)$ and $\Psi(.)$, 
it do exists some nonzero uncertainty in both time and frequency domain. 
The global entropy would be partially due to the (inherent) time-uncertainty and partially due to the (inherent) frequency-uncertainty. 
The entropy is only added when dealing with independent variables. 
We assume that the mutual information between time and frequency domain is zero, i.e., 
$I(\psi,\Psi)= H_t (\psi) + H_f (\psi) - H_{t,f} (\psi)=0$. 

An interesting argument consists of defining the global entropy of a wavelet as the sum of the entropy in both domains. 
\\
\\
Formally,

\begin{definition}
(Global entropy of a continuous wavelet). The global entropy of a wavelet $\psi(.)$ is defined by
\begin{equation}
H_\psi := H_t (\psi)+ H_f (\psi)
\end{equation}
\end{definition}

A direct property follows from such a definition: every daughter wavelet has the same global entropy of the mother wavelet, 
some sort of conservation principle. It follows then

\begin{corollary}: 
The global entropy is preserved within the same wavelet family $\left \{ \psi _{a,b}(t) \right \}_{a\neq 0,b\epsilon \mathbb{R}}$ 
so we are able to find a unique entropy value associate to a wavelet basis.
\end{corollary}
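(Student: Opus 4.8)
The plan is to obtain this corollary as an immediate consequence of Proposition 1 together with Definition 2, since the two scaling laws established there were arranged precisely so that the time and frequency contributions vary in opposite directions. First I would write the global entropy of an arbitrary daughter wavelet by substituting the two relations of Proposition 1 into the defining sum of Definition 2:
\begin{equation}
H_{\psi_{a,b}} = H_t(\psi_{a,b}) + H_f(\psi_{a,b}) = \bigl(H_t(\psi) + \log_2|a|\bigr) + \bigl(H_f(\psi) - \log_2|a|\bigr).
\end{equation}
The decisive observation is that the two scale-dependent terms $+\log_2|a|$ and $-\log_2|a|$ cancel identically, leaving $H_{\psi_{a,b}} = H_t(\psi) + H_f(\psi) = H_\psi$, with no residual dependence on the dilation $a$.

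Next I would dispose of the translation parameter $b$. Here no extra computation is required: the formulas of Proposition 1 already display no dependence on $b$, which reflects two facts. Both defining integrals run over all of $\mathbb{R}$, so a time shift merely relabels the variable of integration and leaves $H_t$ untouched; and a shift in time contributes only a phase factor $e^{-jwb}$ to the spectrum $\Psi(w)$, which vanishes when one forms $|\Psi_{a,b}(w)|^2$, so $H_f$ is likewise blind to $b$. Consequently the cancellation above holds verbatim for \emph{every} member of the family, regardless of both parameters.

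Finally I would conclude that, since $H_{\psi_{a,b}} = H_\psi$ for all $a \neq 0$ and all $b \in \mathbb{R}$, the global entropy is constant across the entire orbit $\{\psi_{a,b}\}$ and therefore defines a single numerical invariant attached to the mother wavelet, equivalently to the basis it generates. I do not expect any genuine obstacle: the content is entirely carried by the antisymmetric form of the scaling laws, and the only point meriting a sentence of justification is the $b$-independence, which follows from translation invariance rather than from a separate calculation.
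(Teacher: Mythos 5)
Your proposal is correct and follows exactly the paper's route: the paper presents this corollary as an immediate consequence of Proposition 1 and Definition 2, with the $+\log_2|a|$ and $-\log_2|a|$ terms cancelling in the sum $H_t(\psi_{a,b}) + H_f(\psi_{a,b})$. Your additional remark on the $b$-independence merely makes explicit what the paper leaves implicit in the statement of Proposition 1 (whose formulas already contain no $b$), so there is no substantive difference between the two arguments.
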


Thermodynamic concepts of entropy are always related to the temperature. Particularly, common units for entropy are Joule.K$^{-1}$ and cal.K$^{-1}$. 
Having defined the entropy of wavelets, another concept of interest could be the "temperature of a wavelet". 
This modus operandi suggests computing the temperature as the ratio between energy and entropy of the wavelet, i.e., 
$T_ {\psi} := E_ {\psi} / H_{\psi}$, where $E_{\psi }:=\int_{-\infty }^{+\infty }\psi ^{2}(t)dt$ is the energy of the wavelet. 
For instance, the temperature of the complex Morlet wavelet is $T_{CMor}\approx 0.3232^{o}$, that is hotter than the mexican hat wavelet ($T_{Mexh}\approx 0.2700^{o}$), 
and Haar wavelet is even colder, $T_{Haar} \approx0.2006^{o}$.  

\begin{table}[!h]
\caption{ANALYTICAL EXPRESSIONS OF SOME CONTINUOUS WAVELETS: MORLET, MEXICAN HAT, GAUSS1, SHANNON, HAAR, AND DE OLIVEIRA WAVELETS.} \label{tab1}
\begin{tabular}{c c}
\hline \hline
Complex Morlet &$\frac{e^{j5t}.e^{-t^2/2}}{\sqrt[4 ]{\pi}}$\\
(CMor)&\\
Sombrero &$\frac{2}{\sqrt{3}}(t^{2}-1)\frac{e^{-t^{2}/2}}{\sqrt[4]{\pi }}$\\
(mexh)&\\
C Shannon&$sinc(t)e^{-j2\pi t}$, $sinc(t):=\frac{sin\pi t}{\pi t}$\\ 
(Sinc)&\\
Haar &$\left\{\begin{matrix}
\frac{1}{\sqrt{2}} & if \  0< t< 1\\ 
-\frac{1}{\sqrt{2}} & if  \ -1< t< 0 \\ 
0 & otherwise. 
 \end{matrix}\right.$\\
(Haar)&\\
Morlet &$\sqrt{2}cos(5t).\frac{e^{-t^{2}/2}}{\sqrt[4]{\pi }}$\\
(Mor)&\\
Gaussian 1 &$\frac{\sqrt{2}te^{-t^{2}/2}}{\sqrt[4]{\pi }}$\\
(gauss1)&\\
Shannon&$sinc(\frac{t}{2})cos(\frac{3\pi t}{2})$\\
(Sha)&\\
de Oliveira &$\Re e \ \psi (t,\alpha )=s(t-0.5,\alpha )$\\
(CdeO)&$\Im m \  \psi (t,\alpha )=\widetilde{s}(t-0.5,\alpha )$\\
&$s$ and $\widetilde{s}$ defined according to \cite{de Oliveira1}.\\
\hline
\end{tabular}
\end{table}

Since the two densities $\psi^2 (t)$ and $\frac{1}{2 \pi} | \Psi(w)|^2$ are related via the transform pair $\psi (t) \Leftrightarrow \Psi(w)$, the time-frequency relationship is disclosed on the "partition"  of the total uncertainty. 
The concept of isoresolution (introduced \cite{Soares}) can now be bringing into play. 
Specifically, "Are there wavelets that achieve the same entropy in both domain"? 
A class of signals that hold special and nice-looking properties are the invariant signals under the Fourier transform. 
Wavelets that belong to this specific class achieve matching time and frequency entropy, 
upholding the earlier idea of isoresolution. Evoking the following proposition can easily prove this \cite{Soares}:

\begin{proposition}:  
Possible eigenvalues of the Fourier transform operator are one of the four roots of the unit $(\pm 1,\pm j)$ times $\sqrt{2 \pi}$.
\end{proposition}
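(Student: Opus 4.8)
The plan is to exploit the well-known fact that iterating the Fourier transform returns the original function up to a reflection and an overall scaling constant that is fixed by the normalization. First I would pin down the transform convention implicit throughout the paper, namely $\Psi(w)=\int_{-\infty}^{+\infty}\psi(t)e^{-jwt}dt$ with the factor $\frac{1}{2\pi}$ carried by the inversion formula; this is the convention consistent with the density $\frac{1}{2\pi}|\Psi(w)|^2$ and with the Parseval identity invoked earlier. Writing $\mathcal{F}$ for this operator, an eigenfunction is a nonzero $\psi$ with $\mathcal{F}[\psi]=\lambda\psi$, and the entire question reduces to determining which scalars $\lambda$ can occur.

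The key computation is to apply $\mathcal{F}$ twice. Substituting the definition into itself and interchanging the order of integration, the inner integral produces a Dirac kernel $\int_{-\infty}^{+\infty}e^{-jw(t+\xi)}dw=2\pi\,\delta(t+\xi)$, so that $\mathcal{F}^2[\psi](\xi)=2\pi\,\psi(-\xi)$; that is, $\mathcal{F}^2$ is $2\pi$ times the reflection operator $\psi(t)\mapsto\psi(-t)$. Applying $\mathcal{F}^2$ once more and noting that the reflection is an involution yields $\mathcal{F}^4[\psi]=(2\pi)^2\psi$, i.e. $\mathcal{F}^4=(2\pi)^2 I$ on a suitable class of functions.

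With this in hand the eigenvalue constraint is immediate: if $\mathcal{F}[\psi]=\lambda\psi$, then $\mathcal{F}^4[\psi]=\lambda^4\psi$, and comparing with $\mathcal{F}^4[\psi]=(2\pi)^2\psi$ forces $\lambda^4=(2\pi)^2=4\pi^2$. Solving $\lambda^4=4\pi^2$ gives $\lambda^2=\pm 2\pi$, hence $\lambda\in\{\sqrt{2\pi},-\sqrt{2\pi},\,j\sqrt{2\pi},-j\sqrt{2\pi}\}=\sqrt{2\pi}\cdot\{\pm 1,\pm j\}$, which is exactly the claimed list.

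I expect the main obstacle to be the rigorous justification of the step $\mathcal{F}^2[\psi](\xi)=2\pi\,\psi(-\xi)$: the formal manipulation with the Dirac kernel is really the Fourier inversion theorem in disguise, so one must restrict to a function class (for instance the Schwartz functions, which comfortably contains the smooth, rapidly decaying wavelets of interest here) where inversion and the Fubini interchange are legitimate. A secondary point worth stressing is that the proposition only asserts which eigenvalues are \emph{possible} --- the argument rules out every other value but does not by itself produce an eigenfunction; existence of all four eigenvalues would be confirmed separately by exhibiting the Hermite--Gaussian eigenfunctions, consistent with the earlier remark that Hermite-modulated Gaussians are themselves wavelets.
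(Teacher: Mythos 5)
Your proof is correct: the paper itself offers no argument for this proposition, deferring entirely to the citation \cite{Soares}, and your computation $\mathcal{F}^2[\psi](\xi)=2\pi\psi(-\xi)$, hence $\mathcal{F}^4=(2\pi)^2 I$ and $\lambda^4=4\pi^2$, is precisely the standard argument used in that reference, correctly adapted to the non-unitary transform convention that produces the $\sqrt{2\pi}$ factor. Your two closing caveats --- that the Dirac-kernel step is Fourier inversion and needs a function class such as the Schwartz space, and that the argument only constrains which eigenvalues are \emph{possible}, with attainment witnessed separately by Hermite--Gaussian eigenfunctions --- are both apt and make your write-up more complete than the paper's one-line citation.
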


\begin{proof}: see \cite{Soares}. 
\end{proof}\

\begin{proposition}:  
 Isoresolution wavelets hold 
\begin{equation}
H_t (\psi) = H_f (\psi).
\end{equation}
\end{proposition}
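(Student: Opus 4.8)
The plan is to exploit the very definition of an \emph{isoresolution wavelet} as an eigenfunction of the Fourier transform, and then to let the normalisation constant of the frequency density cancel the eigenvalue. First I would record that, for such a wavelet, the transform pair reads $\Psi(w)=\lambda\,\psi(w)$ for some eigenvalue $\lambda$; by the preceding proposition, $\lambda$ is necessarily one of $\pm\sqrt{2\pi}$ or $\pm j\sqrt{2\pi}$, so that in every case $|\lambda|^2=2\pi$.

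The key observation is then a cancellation of constants. Taking squared moduli gives $|\Psi(w)|^2=|\lambda|^2\,|\psi(w)|^2=2\pi\,\psi^2(w)$, where I use that the wavelet is real, whence the frequency probability density becomes
\begin{equation}
\frac{1}{2\pi}|\Psi(w)|^2=\psi^2(w).
\end{equation}
In other words, the density $\tfrac{1}{2\pi}|\Psi(\cdot)|^2$ governing the frequency domain coincides, as a function, with the density $\psi^2(\cdot)$ governing the time domain; the factor $\tfrac{1}{2\pi}$ appearing in the frequency entropy is tailored precisely to absorb the modulus $2\pi$ of the eigenvalue.

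Finally I would substitute this identity into the definition of the frequency entropy,
\begin{equation}
H_f(\psi)=-\int_{-\infty}^{+\infty}\psi^2(w)\,\log_2\psi^2(w)\,dw,
\end{equation}
and rename the dummy variable $w\mapsto t$. The resulting integral is verbatim the definition of $H_t(\psi)$, hence $H_t(\psi)=H_f(\psi)$, as claimed.

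The step I expect to carry the real weight is the bookkeeping of the constants together with the observation that the argument is insensitive to \emph{which} of the four admissible eigenvalues occurs: since only $|\lambda|^2$ enters, the purely imaginary eigenvalues $\pm j\sqrt{2\pi}$ (attached to the odd, antisymmetric eigenfunctions) are handled on exactly the same footing as the real ones $\pm\sqrt{2\pi}$. The only point demanding care is the tacit reality assumption $|\psi|^2=\psi^2$ built into the time-entropy definition; for a complex isoresolution wavelet one would simply repeat the identical computation with $|\psi(\cdot)|^2$ in place of $\psi^2(\cdot)$ throughout.
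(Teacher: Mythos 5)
Your proof is correct and is essentially the paper's own argument: the paper's proof consists precisely of the eigenfunction identity $|\Psi(w)|=\sqrt{2\pi}\,|\psi(w)|$ (drawn from the preceding proposition on Fourier eigenvalues) followed by "the proof follows," which is exactly your cancellation of the $\tfrac{1}{2\pi}$ normalisation against $|\lambda|^2=2\pi$ and the renaming of the dummy variable. You merely make explicit the constant bookkeeping and the reality caveat ($|\psi|^2$ versus $\psi^2$) that the paper leaves tacit.
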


\begin{proof}: Fourier eingenfunction wavelets possess $|\Psi (w)|= \sqrt {2 \pi} |\psi(w)|$ and the proof follows. 
\end{proof}\

In order to gain insight into this new thought on wavelets, a number of standard wavelets were selected (Table I) 
and we plotted both time and frequency density functions associated with such wavelets (Figure 1).
It is valuable remarking that CMor and gauss1 are transform-invariant wavelets; hence achieve isoresolution \cite{Soares}, 
yielding a balanced time and frequency entropy (see Proposition 4). Furthermore, Gabor inequality established a 
lower bound on the product between the variances associated with these two densities. 
There might thus exist some uncertainty principle between the time and the frequency entropy, 
i.e., a lower bound on $H_t (\psi).H_f (\psi)$.

\begin{proposition}:  
The uncertainty principle for Shannon entropy of a wavelet family is given by $H_t (\psi). H_f (\psi) \geqslant  H_{\psi}^{2} /4$.
\end{proposition}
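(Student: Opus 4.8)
The plan is to eliminate the global entropy $H_\psi$ using Definition 2 and reduce the claimed bound to an elementary inequality in the two scalars $H_t(\psi)$ and $H_f(\psi)$. Setting $x := H_t(\psi)$ and $y := H_f(\psi)$, Definition 2 gives $H_\psi = x + y$, so the assertion $H_t(\psi)\,H_f(\psi) \geq H_\psi^2/4$ reads
\begin{equation}
xy \geq \frac{(x+y)^2}{4}.
\end{equation}
First I would clear the denominator and expand, turning this into $4xy \geq x^2 + 2xy + y^2$, that is
\begin{equation}
0 \geq x^2 - 2xy + y^2 = (x-y)^2 = \bigl(H_t(\psi) - H_f(\psi)\bigr)^2 .
\end{equation}
So the entire proposition is equivalent to the single statement $\bigl(H_t(\psi)-H_f(\psi)\bigr)^2 \leq 0$, and the whole task is to verify this reduced inequality.

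I would then interpret the reduced form. A real square is never negative, so $(H_t-H_f)^2 \leq 0$ can be met only with equality, and only when $H_t(\psi) = H_f(\psi)$ --- which is precisely the isoresolution condition of Proposition 4. To exhibit this within a family I would invoke the scaling relations of Proposition 1: because $H_t(\psi_{a,b}) = H_t(\psi) + \log_2|a|$ and $H_f(\psi_{a,b}) = H_f(\psi) - \log_2|a|$, the sum $H_\psi = H_t + H_f$ is a family invariant (Corollary 3), while the difference $H_t - H_f$ is driven to zero by the balanced scale $|a| = 2^{(H_f(\psi)-H_t(\psi))/2}$. At that representative the bound holds, with equality.

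The main obstacle is intrinsic and lies in the direction of the inequality. The reduction above shows that the stated bound is equivalent to $\bigl(H_t-H_f\bigr)^2 \leq 0$, which fails strictly at every wavelet with $H_t \neq H_f$; indeed the elementary arithmetic--geometric comparison forces the opposite ordering off the isoresolution locus. Hence the $\geq$ sign can be sustained only where $H_t = H_f$, at which point the proposition degenerates into the equality of Proposition 4 rather than a genuine strict lower bound. I expect that a true uncertainty principle in this entropic setting must instead bound the sum $H_\psi = H_t + H_f$ from below, in the spirit of a Hirschman--Beckner entropic uncertainty relation; that lower bound on $H_\psi$, and not the product inequality exactly as written, is what carries the uncertainty content.
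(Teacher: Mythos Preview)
Your reduction is correct, and in fact it exposes an error in the paper's own argument rather than a gap in yours. The paper's proof reads: ``The minimum of the product $H_t(\psi)\bigl(H_\psi - H_t(\psi)\bigr)$ is achieved when $H_t(\psi)=H_\psi/2$ \ldots\ establishing thereby the bound.'' But the quadratic $x\mapsto x(H_\psi - x)$ is a downward-opening parabola; the critical point $x=H_\psi/2$ is its \emph{maximum}, not its minimum, with value $H_\psi^2/4$. Thus the paper's own computation actually yields
\[
H_t(\psi)\,H_f(\psi)\;\leq\;\frac{H_\psi^{\,2}}{4},
\]
which is precisely the AM--GM direction you recovered via $(H_t-H_f)^2\geq 0$. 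Your observation that the stated inequality $H_tH_f\geq H_\psi^2/4$ is equivalent to $(H_t-H_f)^2\leq 0$, hence holds only at the isoresolution point $H_t=H_f$ and fails strictly elsewhere (as witnessed by every non-isoresolution row of Table~II), is entirely sound.

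So the discrepancy is not between your approach and the paper's approach --- both are the same one-line quadratic/AM--GM argument --- but between the stated direction of the inequality and what that argument actually proves. Your closing remark is also apt: the quantity $H_\psi^2/4$ functions here as an \emph{upper} bound on the product attained only at isoresolution, and any genuine lower ``uncertainty'' content would have to come from an independent bound on the sum $H_\psi$ itself (e.g.\ a Hirschman--Beckner type inequality), not from this algebra alone.
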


\begin{proof}
The minimum of the product $H_{t}(\psi ).\left ( H_{\psi }-H_{t}(\psi ) \right )$ is achieved when $H_t (\psi)= H_{\psi} /2$ 
so that $H_t (\psi)= H_f (\psi)$, establishing thereby the bound.
\end{proof}

The entropy of the wavelets presented in Table I was subsequently determined (Table II), 
displaying how much the Morlet wavelet is endowed with wavelet analysis. 
The closed expression for the entropy of the (complex) Morlet wavelet is

\begin{equation}
H_t (\psi _{CMor} ) = H_f (\psi _{CMor} )= log_2 (\sqrt{\pi e}),
\end{equation}

so that $H_{CMor} = log_2 (\pi e)$. The time entropy of the standard Haar wavelet is equal to unity as predicted by Proposition 2. 
Alike, the frequency entropy of the complex Shannon wavelet is bounded by $log_2(\pi)$

The frequency entropy of de Oliveira wavelet is bounded by $H_f(deO)<log_2(\pi+3 \pi \alpha)$, i.e., 
2.030008 and 2.329568 for $\alpha$=0.1 and 0.2, respectively 
(referenced at the URL of the author's Website http://www2.ee.ufpe.br/codec/WEBLET.html)

\begin{figure}[!htb]
\centering
\includegraphics[width=0.5\textwidth]{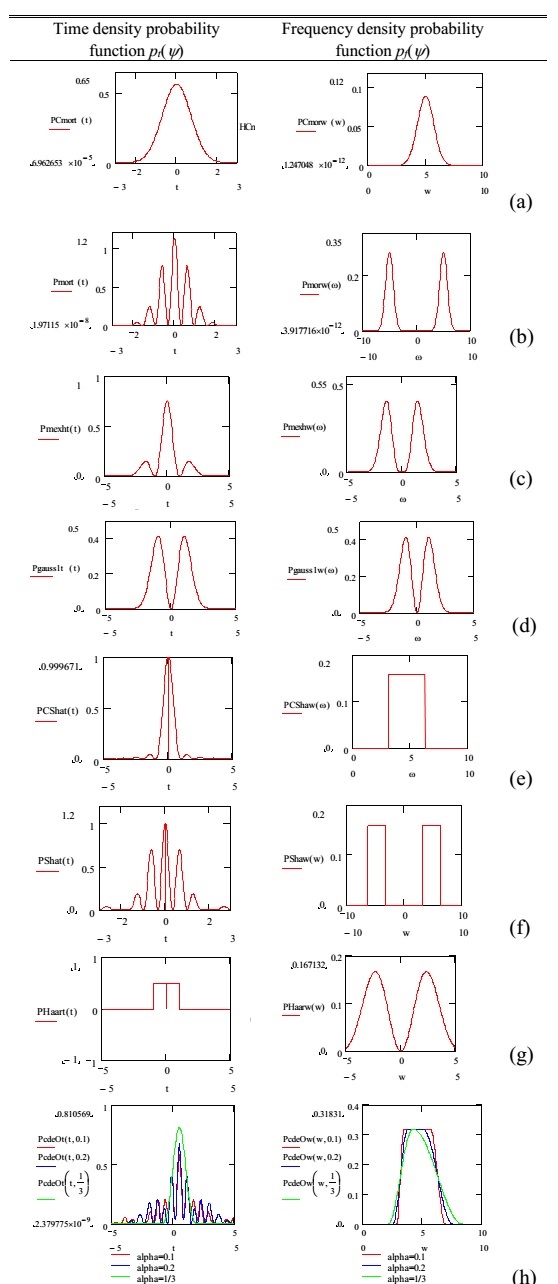}
\caption{Time and frequency probability density functions associated with some standard wavelets. Plots for: 
(a) Complex Morlet wavelet; (b) Real Morlet wavelet; (c) Mexican hat wavelet;  (d) gauss1 wavelet, (e) Complex Shannon wavelet; (f) Real Shannon wavelet; (g) Haar wavelet;  
(h) de Oliveira wavelet for different roll-off parameters, $\alpha$=0.1, 0.2, and 1/3 \cite{de Oliveira1}.}
\end{figure}

We speculate that the minimum entropy $H$ is achieved by a double Gaussian distribution (time and frequency domain), 
which is entirely in agreement with the long-standing concept of logon by Gabor \cite{Gabor}. 
In fact, a couple of wavelets are particularly significant: Morlet wavelet and Haar wavelet. 
Inquiringly, these were exactly pioneer wavelets! 

\begin{table}[!h]
\caption{ENTROPY OF SOME WAVELETS: TIME ENTROPY, FREQUENCY ENTROPY, AND AREA OF A WAVELET CELL IN THE JOINT t-f PLANE AND GLOBAL ENTROPY. WAVELETS: MORLET, SOMBRERO, SHANNON, GAUSS1, DE OLIVEIRA AND HAAR.} \label{tab1}
\begin{tabular}{c c c c c}
\hline \hline
CMor &1.547096 & 1.547096 & 2.393506 & 3.094191\\
Mor    & 1.104425&	2.547095& 2.813075 & 3.651520\\
mexh	&1.715098&	1.988567& 3.410587 & 3.703665\\
Sinc	&2.221052&	1.651383& 3.667807 & 3.872435\\
gauss1&1.937147&	1.937147& 3.752538 & 3.874293\\
Sha	&1.768634&	2.651665& 4.689824 & 4.420299\\
CdeO&&&\\
$\alpha$=0.1& 3.045824& 1.818698 & 5.539434& 4.864522\\
$\alpha$=0.2& 2.915276& 1.985887 &5.789408 &4.901163\\
Haar	&1.000000 & 3.985653 &3.985653 &4.985653\\
\hline
\end{tabular}
\end{table}

A good number of orthogonal wavelets cannot be described by analytical expressions, 
but fairly via filter coefficients [20] Daubechies, Symmlets, Coiflets, or even Mathieu wavelets \cite{Lira}. 
The Shannon entropy of such wavelets can be found by using the so-called two-scale relationship of a multiresolution analysis \cite{Mallat} \cite{Jawerth}:
Two-scale relation of the scaling function:
\begin{equation}
\phi (t)=\sqrt{2}\sum_{l=-\infty }^{+\infty } h_l \phi (2t-l)
\end{equation}
Two-scale relation of the wavelet:
\begin{equation}
\psi (t)=\sqrt{2}\sum_{l=-\infty }^{+\infty } g_l \phi (2t-l)
\end{equation}

Let $\mathbb{Z}$ denote the set of integers. The low-pass $H(.)$ filter of the MRA: 
$H(\omega ):=\sum_{l\epsilon \mathbb{Z}}{} h_l e^{-j\omega l}$ with $H(0)=\sqrt2$ and $H( \pi)=0$.
The high-pass $G(.)$ filter of the MRA:

$G(\omega ):=\sum_{l\epsilon \mathbb{Z}}{} g_l e^{-j\omega l}$ with $G(0)=0$ and $G( \pi)=\sqrt2$.
For the sake of simplicity, filter coefficients will encompass the term $\sqrt2$, i.e., 
$h_{k}\leftarrow \frac{1}{\sqrt2} h_{k}$ and $g_{k}\leftarrow \frac{1}{\sqrt2} g_{k}$. In these cases,
\begin{equation}
\int_{-\infty }^{+\infty }\phi ^2(t)dt=\sum_{k\epsilon \mathbb{Z}}{} h_{k}^{2}=E_{\phi }=1
\end{equation}

\begin{equation}
\int_{-\infty }^{+\infty }\psi ^2(t)dt=\sum_{k\epsilon \mathbb{Z}}{} g_{k}^{2}=E_{\psi }=1.
\end{equation}

The (discrete) probability density of $\psi$  is described by the probabilities $\left \{ g_{k}^{2} \right \}_{k\epsilon \mathbb{Z}}$. 
The MRA wavelet entropy can be computed by

\begin{equation}
H_{MRA}(\psi):=\sum_{k=-\infty }^{+\infty } g_{k}^{2}log_2 (g_{k}^{2})=\sum_{k=-\infty }^{+\infty } h_{k}^{2}log_2 (h_{k}^{2})
\end{equation}

\begin{table}[!h]
\caption{TIME ENTROPY OF SOME ORTHOGONAL WAVELETS DEFINED BY MRA FILTERS. 
WAVELETS OF COMPACT SUPPORT: DAUBECHIES, SYMMLET WAVELETS WITH $N=1,..,4$ VANISHING MOMENTS; 
OTHER WAVELETS: COIFLETS $N=1, 2$ AND ELLIPTIC-CYLINDRICAL MATHIEU WAVELET WITH PARAMETERS $\nu$ AND $q$ .} \label{tab1}
\begin{tabular}{c c c c c}
\hline \hline
          & Daubechies 	& Symmlet 		& Coiflet 		& Mathieu	                    \\
N        & HMRA(dBN) 	& HMRA(symN)	& HMRA(coifN) 	& $\nu$; $q$   HMRA(mth)\\
1	&1.000000		&1.000000		& 1.183011		&  1; 5     2.097353\\
2	&1.165857		&1.165857		& 1.376543		&  5; 5     1.739816\\
3	&1.397665		&1.397665 		&                            	&                            \\				
4	&1.447745		&1.345513     	&		           &                            \\
\hline
\end{tabular}
\end{table}

Let $H_2 (p):=-p.log_2 p-(1-p).log_2 (1-p)$ be the Shannon binary entropy \cite{Shannon}. 
The Shannon entropy of the dB1 wavelet (Haar), which is described by 
$\left [    \frac{1}{\sqrt2} \frac{1}{\sqrt2}  \right ]$, is 
$H_{MRA} ({\psi}_{dB1} )= H ({\psi}_{Haar})= H_2 (1/2)=1$ as expected. 
For dB2, whose filter coefficients are given by

\begin{equation}
\left [ \frac{1+\sqrt{3}}{4\sqrt2}, \frac{3+\sqrt{3}}{4\sqrt2}, \frac{3-\sqrt{3}}{4\sqrt2}, \frac{1-\sqrt{3}}{4\sqrt2}\right ]
\end{equation}

the entropy is $H_{MRA} (\psi _{dB2} )=1.16585703$. Values of the Shannon entropy for a few orthogonal wavelets 
defined by MRA filters are presented in the Table III. 
The bound derived for supportly compact wavelets (Proposition 2) can be checked without effort. 
We hypothesize from Eqn(18) and Table III that $H_{MRA} (\psi)$ is probably the same as $H_t (\psi)$.

\section{Concluding Remarks}

Wavelet is a body of knowledge of enormous fascination and far-reaching utility in signal processing, 
which is advancing at an astonishing pace. Mimicking the quantum mechanics approach, each of (continuous) wavelets is associated with 
two probability density functions one on the time domain, and another on the frequency domain. 
Consequently, the Shannon entropy and Renyi entropy of a wavelet were defined. 
We derived some sort of entropy conservation principle, which stated that wavelet versions resulting from the same mother wavelet 
retain the same Shannon entropy. The logical consistency of the m$\grave{e}$lange of claims throughout this paper provides 
certain evidence on the worthiness of our approach. 
The emphasis of this paper was on conveying the chief ideas as opposed to presenting a formal mathematical development or applications. 
Indeed, much was left to be investigated. Despite the fact that a mere trough draft of this technique had been outlined, 
it instigates certain expectation on both theoretical and practical information theory outcome related to wavelet analysis 
(e.g. wavelet compression can be carried out on wavelet-based information-theory-oriented algorithms not on the energy.) 
We foresee that our attempt to present the underlying philosophy of behind the wavelet information theory may help users navigate the ocean of wavelets.

\section{ACKNOWLEDGMENTS}
The author thanks Dr. Renato Jos\'e de Sobral Cintra (DE-UFPE) for quite a lot of comments. Worth also a thank an anonymous reviewer for perusal and comments.


\begin{thebibliography}{99}

\bibitem{CoverThomas} {T. M. Cover, J. A. Thomas, \emph{Elements of Information Theory}, 
Wiley, 1991.}

\bibitem{Vetterli} {M. Vetterli, J. Kovacevic, \emph{Wavelets and Subband Coding}, 
Englewood Cliffs: Prentice-Hall, NJ, 1995.}

\bibitem{de Oliveira} {H. M. de Oliveira, \emph{An$\acute{a}$lise de Sinais para Engenheiros: Wavelets},  Rio de Janeiro: Brasport, 2007 S$\acute{e}$rie da Soc. Bras. de Telecomunicacoes SBrT,  ISBN 978-85-7452-283-8, 2007}. 

\bibitem{Beiser} {A. Beisen, \emph{Concepts of Modern Physics}, New York: McGraw-Hill
Series in Fundamental Physics, 1994.}

\bibitem{Gabor} {D. Gabor, Theory of Communications, \emph{J. IEE} (Londres), vol. 93,
pp. 429-457, 1946.}

\bibitem{Shannon} {C. E. Shannon, A Mathematical Theory of Communication, \emph{Bell System Tech. J.}, 
vol. 27, pp. 379-457 and pp. 623-656, 1948.}
 
\bibitem{Bricmont} {J. Bricmont, Science of Chaos or Chaos in Science? In: The Flight from Science and Reason, \emph{Annals of the New York Academy of Sciences}, 
vol. 775, Eds. P.R. Gross, N. Levitt, and M.W. Lewis, 4th printing, pp. 131-175, 1998.}

\bibitem{Schrodinger} {E. Schr\"odinger, What is life? 
[Reprinted in: \emph{What is Life and Other Scientific Assays}, New York: Double-day Anchor Books, 1956.]}

\bibitem{Jumarie} {G. Jumarie, Further Results on the Information Theory of Deterministic Functions and its Application to Pattern Recognition, 
\emph{Annales des T$\acute{a}$l$\acute{a}$communic.}, vol.45, n.1-2, pp. 66-68, Feb., 1990.}

\bibitem{Kolmogorov} {A. N. Kolmogorov, On the Shannon Entropy of Information Theory in the Case of Continuous Signals, 
\emph{IEEE Trans. on  Info. Theory}, pp. 102-108, Sept., 1956.}

\bibitem{Battail} {G. Battail, \emph{Th$\acute{a}$orie de l'Information: Application aux techniques de communication}, 
Paris: Masson, 1997.}

\bibitem{Quiroga1} {R. Quian Quiroga, O. A. Rosso, and E. Basar, Wavelet Entropy: A Measure of Order in Evoked Potentials, 
\emph{EEG Suppl.}, pp. 298-302, 1999.}

\bibitem{Quiroga2} {R. Quian Quiroga, O. A. Rosso, E. Basar, and M. Sch\"urmann, Wavelet Entropy in Event-related Potentials: 
A New Method Shows Ordering of EEG Oscillations. \emph{Biol. Cybern.} vol. 84, pp.291-299, 2001.}

\bibitem{Rosso} {O. A. Rosso, S. Blanco, J. Yordanova, V. Kolev, A. Figliola, M. Sch\"urmann, and  E. Basar, 
Wavelet Entropy: A New Tool for Analysis of Short Time Brain Electrical Signals. \emph{J. Neurosci. Meth.} vol. 105, pp. 65-75, 2001.}

\bibitem{Yordanova} {J. Yordanova, V. Kolev, O. A. Rosso, M. Sch\"urmann, O. W. Sakowitz, M. \"Ozg\"oren, and E. Basar, 
Wavelet Entropy Analysis of Event-related Potentials Indicates Modality-independent Theta Dominance, \emph{J. Neurosci. Meth.} vol. 117, pp.99-109, 2002.}

\bibitem{Hornero} {R. Hornero, D.E. Ab$\acute{a}$solo, and P. Espino,
The Use of Wavelet Entropy to Compare the EEG Background Activity of Epileptic Patients and Control Subjects, 
in \emph{Proc. Seventh Int. Symp. on Signal Processing and Its Applications}, vol. 2, July 1-4, pp. 5-8, 2003.}

\bibitem{Sello} {S. Sello, Wavelet Entropy and the Multi-peaked Structure of Solar Cycle Maximum, 
\emph{New Astronomy}, vol. 8, pp. 105-117, 2003.}

\bibitem{de Oliveira1} {H. M. de Oliveira, L. R. Soares, and T. H. Falk, 
A Family of Wavelets and a New Orthogonal Multiresolution Analysis Based on the Nyquist Criterion, 
\emph{J. Commun. Info. Science}, former J. of the Brazilian Telecomm. Soc vol. 18, N.1, pp. 69-76, Jun., 2003. Available: http://arxiv.org/abs/1503.00766} 

\bibitem{Soares} {L. R. Soares, H. M. de Oliveira, R. J. S. Cintra, and R. M. Campello de Souza, Fourier Eigenfunctions, Uncertainty Gabor Principle and Isoresolution Wavelets, in \emph{Anais do XX Simposio Bras. de Telecomunicacoes}, Rio de Janeiro, Oct., 2003. Available: http://arxiv.org/abs/1502.03401}

\bibitem{Bultheel} {A. Bultheel, Learning to Swim in a Sea of Wavelets, 
\emph{Bull. Belg. Math. Soc.}, vol.  2, pp. 1-46, 1995.}

\bibitem{Lira}{M.M.S. Lira, H. M. de Oliveira and R.J.S. Cintra, Elliptic-Cylinder Wavelets: The Mathieu Wavelets, \emph{IEEE Signal
Process. Letters}, vol. 11, n.1, Jan., pp. 52 - 55, 2004. Available: http://arxiv.org/abs/1501.07255}

\bibitem{Mallat} {S. Mallat, A Theory for Multiresolution Signal Decomposition: The Wavelet Representation, \emph{IEEE Trans. Pattern Analysis and Machine Intelligence}, vol. 11, n.7, pp. 674-693, July, 1989.}

\bibitem{Jawerth} {B. Jawerth, and W. Sweldens, An Overview of Wavelet Based Multiresolution Analyses, \emph{SIAM Rev.}, vol. 36, no.3, pp. 377-412, 1994.}

\bibitem{Morettin}{P.A. Moretin, \emph{Ondas
e Ondaletas: da An$\acute{a}$lise de Fourier � An$\acute{a}$lise de Ondaletas} (in
Portuguese). [Wave and wavelets: from Fourier analysis to wavelets],
Sao Paulo: Edusp, 1999.}

\end{thebibliography}
\end{document}